\theoremstyle{definition} 
\theoremstyle{definition}
\newtheorem {lemma} {Lemma}
\title{Security Proof of a Semi-Quantum Key Distribution Protocol: Extended Version\footnote{This is an extended version of a paper published in IEEE ISIT 2015.}}
\author{Walter O. Krawec\\\small{Stevens Institute of Technology}\\\small{Hoboken NJ, 07030 USA}\\\small{\texttt{walter.krawec@gmail.com}}}
\begin{document}
\maketitle

\begin{abstract}
Semi-quantum key distribution protocols are designed to allow two users to establish a secure secret key when one of the two users is limited to performing certain ``classical'' operations.  There have been several such protocols developed recently, however, due to their reliance on a two-way quantum communication channel (and thus, the attacker's opportunity to interact with the qubit twice), their security analysis is difficult and little is known concerning how secure they are compared to their fully quantum counterparts.  In this paper we prove the unconditional security of a particular semi-quantum protocol.  We derive an expression for the key rate of this protocol, in the asymptotic scenario, as a function of the quantum channel's noise.  Finally, we will show that this semi-quantum protocol can tolerate a maximal noise level comparable to certain fully quantum protocols.
\end{abstract}

\section{Introduction}

Quantum Key Distribution (QKD) protocols are designed to allow two users, Alice ``$A$'' and Bob ``$B$'', to establish a shared secret key, secure against even an all powerful adversary, Eve ``$E$''.  Generally these protocols consist of three stages: first, the quantum communication stage, consisting of several iterations of $A$ and $B$ communicating over a quantum channel; for instance $A$ preparing qubits in a variety of different bases, and $B$ measuring them in different bases - if the basis choice matches, they share a secret bit of information.  Next, using an authenticated classical channel, one which $A$ and $B$ may read/write to, but $E$ may only read from, they perform a parameter estimation stage where they divulge a randomly chosen subset of their measurement results (along with other data depending on the protocol) in order to estimate various statistics of the quantum channel (in particular, the error rate of the channel).  The remaining measurement results, those which were not publicly divulged, are used as their \emph{raw key} - a string of bits which $E$ may hold some information on, and which may contain some discrepancies due to the error rate in the quantum channel.  If this error rate is ``low enough'' $A$ and $B$ will move on to the third stage, running an error correction protocol and privacy amplification protocol.  These two routines, which again utilize the authenticated classical channel, will first fix any errors in $A$ and $B$'s raw key, while privacy amplification will then take as input this error corrected raw key and output a key that is smaller, but secure.  The reader is referred to \cite{QKD-survey} for more information.

A major question is how low is ``low enough'' - that is how much noise can a protocol tolerate before $E$ potentially holds too much information on $A$ and $B$'s raw key so that privacy amplification is unable to distill a secure secret key.  To determine this, one typically considers the \emph{key rate} of a QKD protocol, which we denote by $r$.  This quantity is the fraction of secure secret key bits after privacy amplification, divided by the size of the raw key.

Let us first consider \emph{collective attacks} \cite{QKD-survey}: those attacks where an eavesdropper performs the same attack operation each iteration of the quantum communication stage (thus each iteration may be treated independently of the others), however $E$ is allowed to postpone her measurement until any future time of her choice.  Let $N$ be the size of $A$ and $B$'s raw key (before error correction, but after parameter estimation).  Let $\ell(N) \le N$ denote the number of secure secret key bits that $A$ and $B$ may distill after error correction and privacy amplification (possibly $\ell(N) = 0$ if no secure key bits can be distilled).  It was shown in \cite{QKD-renner-keyrate} and \cite{QKD-Winter-keyrate}, that, in the \emph{asymptotic scenario} (where $N \rightarrow \infty$):
\[
r = \lim_{N\rightarrow \infty}\frac{\ell(N)}{N} \ge \inf(S(B|E) - H(B|A)),
\]
where $S(B|E)$ is the conditional von Neumann entropy, $H(B|A)$ is the conditional Shannon entropy (see next section), and the infimum is over all collective attacks an eavesdropper may perform which conform to the statistics observed during parameter estimation (e.g., those attacks which induce the observed error rate).  Note that above, we have written the equation when \emph{reverse reconciliation} is used which seems more natural for the protocol we consider in this paper.  If the protocol in question is permutation invariant, it was shown in \cite{QKD-general-attack,QKD-general-attack2} that proving security against these collective attacks is sufficient to prove security against \emph{general attacks} - those attacks where $E$ is allowed to perform any operation allowed within the laws of physics.  The reader is again referred to \cite{QKD-survey} for more information on this key rate equation, direct and reverse reconciliation, the asymptotic scenario, and the various attack models commonly used.

Clearly it is desired that $r > 0$, in which case $A$ and $B$ may distill a secure secret key; if $r = 0$, $E$'s attack was too strong and she holds too much information on $A$ and $B$'s raw key for privacy amplification to work.  The goal then, typically, is to derive a simplified expression for the quantity $\inf(S(A|E) - H(A|B))$, given an observed error rate.  This expression should be a function only of values that may be estimated by $A$ and $B$ during the parameter estimation stage.

\subsection{Semi-Quantum Key Distribution}

A semi-quantum key distribution (SQKD) protocol, first introduced in \cite{SQKD-first}, has the same goal as a QKD protocol; however now, one of the two users - typically $B$ - is limited to performing what are called ``classical'' or ``semi-quantum'' operations.  Namely, $B$ is only allowed to work directly with the computational $Z$ basis (spanned by elements $\ket{0}$ and $\ket{1}$).  These protocols are interesting from a theoretical stand-point as they attempt to answer how ``quantum'' a protocol needs to be in order to obtain a benefit over a classical protocol.  Further, they may also be practically interesting as they may require less hardware on the limited user's side ($B$).

These protocols rely on a two-way quantum communication channel, allowing a qubit to travel from $A$ to $B$, then back to $A$.  The all-powerful attacker $E$, who is sitting between $A$ and $B$, is able now to attack the qubit twice making the security analysis of these protocols very difficult.  Up to now, most work has been showing the \emph{robustness} of a SQKD protocol.  As defined in \cite{SQKD-first,SQKD-second}, an SQKD protocol is robust if, for any attack that $E$ may perform which potentially causes her to gain information on $A$ or $B$'s raw key, necessarily induces a detectable disturbance.  Though this provides a good starting point for proving security, it is not as powerful a result as computing the key rate equation.  Robustness only tells the users that any noise might be an attacker; the latter notion gives a relation between the amount of noise induced by the most powerful class of attacks $E$ may employ, and the size of $A$ and $B$'s final secret key (possibly $\ell(N) = 0$ if there is too much noise).

While several authors have developed SQKD protocols and proven their robustness (for instance \cite{SQKD-first,SQKD-second,SQKD-3,SQKD-lessthan4,SQKD-cl-A}), only a few authors have considered security of an SQKD protocol beyond robustness.  In particular, \cite{SQKD-information} considered security against \emph{individual attacks} (a class of attacks weaker than collective attacks and which do not provide a security bound against general attacks - our primary goal); that reference managed to bound the amount of information $E$ may hold based on the level of noise induced by her (individual) attack.  In \cite{SQKD-cl-A}, the authors also considered a form of individual attack when proving security of a new protocol they devised.  In \cite{SQKD-MultiUser}, a \emph{mediated SQKD protocol} was developed and its security against general attacks was proven (including its key-rate equation); this is different from the setting in this paper, however, as such a mediated protocol operates with two semi-quantum users, and an untrusted quantum server.  Finally, in \cite{SQKD-Single-Security}, a series of security results were shown for a particular class of SQKD protocol: \emph{single state} protocols - ones where $A$ is limited to preparing and sending $\ket{+}$ each iteration of the quantum communication stage.  In this paper, however, we consider the more complicated \emph{multi-state} protocols where $A$ is not limited in this manner (she may prepare any qubit she likes each iteration, chosen randomly).

In this paper, we will prove, for the first time, the unconditional security of the SQKD protocol from \cite{SQKD-first}.  In particular, we will derive a lower bound on the key rate, based only on certain parameters $A$ and $B$ may estimate.  We will show that this protocol can tolerate up to $5.34\%$ noise before $A$ and $B$ should abort (that is, the key rate is strictly positive so long as the error rate is less than $5.34\%$).  While we cannot compare this noise threshold with other SQKD protocols (as none others have yet been considered, besides the mediated one mentioned earlier), this does compare favorably with certain ``fully'' quantum protocols - e.g., B92 \cite{QKD-B92} which supports up to $4.8\%$ noise \cite{QKD-renner-keyrate} and a three-state variant of BB84 which supports up to $4.25\%$ noise \cite{QKD-BB84-three-state}.  Thus our results provide further evidence that security in the semi-quantum setting can be comparable to security in the ``fully'' quantum setting.

\section{Notation}

Given a set of real values $\{p_1, p_2, \cdots, p_n\}$, with $p_i \ge 0$ and $\sum_ip_i = 1$, we write $H(p_1, p_2, \cdots, p_n)$ to be the classical or Shannon entropy of these values: $H(p_1, \cdots, p_n) = -\sum_ip_i\log_2 p_i$ (where, as usual, $0\cdot \log_2 0 = 0$).  When given only a single value $p \in [0,1]$, we write $h(p)$ to mean $H(p, 1-p)$.

Let $\rho$ be a density operator acting on some finite dimensional Hilbert space $\mathcal{H}$.  Then, we denote by $S(\rho)$ its von Neumann entropy.  Given the eigenvalues $\{\lambda_i\}$ of $\rho$, this value is: $S(\rho) = -\sum_i\lambda_i \log \lambda_i$.  (Note that all logarithms in this paper are base two, unless otherwise specified.)

If $\rho_{AB}$ is a density operator acting on the bipartite space $\mathcal{H}_A\otimes\mathcal{H}_B$, we will often write $S(AB)$ to denote the von Neumann entropy of $\rho_{AB}$ and $S(B)$ the von Neumann entropy of $\rho_B$ where $\rho_B = tr_A\rho_{AB}$.  We denote by $S(A|B)$ the von Neumann entropy of $A$'s system conditioned by $B$.  That is: $S(A|B) = S(AB) - S(B) = S(\rho_{AB}) - S(tr_A\rho_{AB})$.

Given a number $z \in \mathbb{C}$, we denote by $Re(z)$ and $Im(z)$, its real and imaginary components respectively.  The conjugate of $z$ is denoted $z^*$.  If $U$ is a matrix with complex entries, $U^*$ denotes its conjugate transpose.

We define the computational $Z$ basis to be those states $\{\ket{0}, \ket{1}\}$.  We define the Hadamard $X$ basis to be those states $\{\ket{+}, \ket{-}\}$ where:
\[
\ket{\pm} = \frac{1}{\sqrt{2}}\ket{0} \pm \frac{1}{\sqrt{2}}\ket{1}.
\]

\section{The Protocol}

The protocol we consider is a semi-quantum one, utilizing a two-way quantum channel (allowing a qubit to travel from $A$ to $B$, then back to $A$), with $B$ being the limited semi-quantum or ``classical'' user.  This means that, while $A$ may prepare and measure qubits in any basis of her choice (choosing different bases each iteration of the quantum communication stage), $B$ is limited to performing one of two operations each iteration:

\begin{enumerate}
  \item He may \emph{Measure and Resend}: the incoming qubit is subjected to a $Z = \{\ket{0}, \ket{1}\}$ basis measurement.  $B$ will then resend his measurement result to $A$ (i.e., if he measures $\ket{r}$, for $r \in \{0,1\}$, he will send a new qubit of the form $\ket{r}$ back to $A$).
  \item Or he may \emph{Reflect}: the incoming qubit is ignored by $B$ and simply reflected back to $A$ without otherwise disturbing it, or learning anything about its state.
\end{enumerate}

The SQKD protocol we consider in this paper was the one first presented in \cite{SQKD-first}.  A single iteration of this protocol's quantum communication stage consists of the following procedure:

\begin{enumerate}
  \item $A$ will prepare and send to $B$ a qubit of the form $\ket{0}, \ket{1}, \ket{+}$, or $\ket{-}$, choosing one at random.
  \item $B$ will choose randomly to either measure and resend or reflect the incoming qubit.
  \begin{itemize}
    \item If he chooses to measure and resend, he will save his measurement result as his raw key bit for this iteration.
  \end{itemize}
  \item $A$ will choose to measure in the same basis she originally used to prepare the qubit from step 1 (e.g., if she sent $\ket{-}$ on step 1, she will measure in the $X$ basis; if she originally sent $\ket{0}$ she will measure in the $Z$ basis).
\end{enumerate}

After repeating the above process $M$ times, $A$ will inform $B$, using the public authenticated classical channel, of her preparation basis choice from step 1 (which determines her measurement basis choice in step 3) for each iteration.  For each iteration that was performed, $B$ will inform $A$ of his choice to measure and resend, or to reflect.  If $B$ measured and resent, and if $A$ chose to prepare (step 1) and measure (step 3) in the $Z$ basis, they will use this iteration for their raw key.  In this paper, we will define $A$'s key to be her measurement result from step 3; $B$'s key will of course be his measurement result in step 2.  Another option, which we do not use in this paper, would be to define $A$'s key to be her preparation choice in step 1.

Note that, if $B$ reflects, and if $A$ prepared and measured in the $X$ basis, $A$ should expect to measure the same state she sent originally; any other result is counted as an error.  Thus, $A$ may immediately estimate the noise of the quantum channel in the $X$ basis by counting the number of iterations where $A$ measures $\ket{-}$ if $B$ reflected and she sent $\ket{+}$; similarly if she measures $\ket{+}$ but sent $\ket{-}$.  Measuring the error in the $Z$ basis may be performed easily by $B$ divulging a portion of his measurement results.  Note that this allows $A$ and $B$ to estimate the $Z$ basis error rate in both quantum channels: the forward ($A$ to $B$) and the reverse ($B$ to $A$).

Note that, to improve efficiency, we may adopt a technique used in \cite{QKD-BB84-Modification} to modify the original BB84 \cite{QKD-BB84} protocol.  Namely, we may have $A$ choose to prepare and thus measure in the $Z$ basis with greater probability than choosing the $X$ basis.  Likewise, we may have $B$ choose to measure and resend more frequently than reflecting.  The same arguments used in \cite{QKD-BB84-Modification} can apply to this setting.

\section{Security Proof}

We will now prove security against collective attacks - those where $E$ performs the same operation each iteration, but may wait to perform a measurement of her private ancilla until any future point in time of her choice.  Later we will consider security against general attacks.

Let $\mathcal{H}_T$ be the two-dimensional Hilbert space modeling the qubit (the \emph{transit space}) and let $\mathcal{H}_E$ be $E$'s private ancilla (this is finite without loss of generality) for one iteration of the protocol.  Let $(U_E, U_F)$ be a pair of unitary attack operators which both act on $\mathcal{H}_T\otimes\mathcal{H}_E$.  Here, $U_E$ will be used to attack the forward direction (the qubit traveling initially from $A$ to $B$) while $U_F$ will be used to attack the reverse direction (the qubit returning from $B$ to $A$).  This is, without loss of generality, the most general form of a collective attack $E$ may use (collective attack implies she will use the same two unitary operators $U_E$ and $U_F$ each iteration of the quantum communication stage and that $E$ will use a different ``copy'' of the space $\mathcal{H}_E$ each iteration).  In order to compute the protocol's key rate, we must first construct the density operator describing the result of a single iteration of the protocol, assuming it is used to contribute to the raw key: namely, assuming that $A$ sent a $Z$ basis state, $B$ measures and resends, and $A$ measures in the $Z$ basis.

In this event, $A$ prepares a qubit of the form $\ket{0}$ or $\ket{1}$, each chosen with probability $1/2$.  The system $E$ receives then is the mixed state:

\[
\rho_0 = \frac{1}{2}\ket{0}\bra{0}_T + \frac{1}{2}\ket{1}\bra{1}_T.
\]

We may assume, without loss of generality, that $E$'s ancilla is cleared to some ``zero'' state $\ket{0}_E \in \mathcal{H}_E$.  $E$ then attacks with $U_E$, an operator which acts on basis states as follows:

\begin{align}
U_E\ket{0,0}_{TE} &= \ket{0, e_0}_{TE} + \ket{1, e_1}_{TE}\label{eq:UE}\\
U_E\ket{1,0}_{TE} &= \ket{0, e_2}_{TE} + \ket{1, e_3}_{TE},\notag
\end{align}
where $\ket{e_i}$ are arbitrary states in $\mathcal{H}_E$, not necessarily normalized nor orthogonal.  Of course, unitarity of $U$ imposes the following conditions:
\begin{align*}
&\braket{e_0|e_0} + \braket{e_1|e_1} = 1\\
&\braket{e_2|e_2} + \braket{e_3|e_3} = 1\\
&\braket{e_0|e_2} + \braket{e_1|e_3} = 0.
\end{align*}

After this operation, $E$ passes the qubit to $B$ who performs a $Z$ basis measurement, recording his result as his raw key bit, and resending his result to $A$.  At this point, the system is in the mixed state:

\begin{align*}
\rho_1 &= \frac{1}{2}\ket{0}\bra{0}_B \otimes \left(\ket{0,e_0}\bra{0,e_0}_{TE} + \ket{0,e_2}\bra{0,e_2}_{TE}\right)\\
&+ \frac{1}{2}\ket{1}\bra{1}_B \otimes \left(\ket{1,e_1}\bra{1,e_1}_{TE} + \ket{1,e_3}\bra{1,e_3}_{TE}\right).
\end{align*}

Of course, $E$ captures the transit qubit on its return, and applies her second attack operator $U_F$.  We may write this operator's action on states of the form $\ket{i, e_j}$ as follows:
\begin{equation}\label{eq:UF}
U_F\ket{i, e_j}_{TE} = \ket{0, e_{i,j}^0}_{TE} + \ket{1, e_{i,j}^1}_{TE},
\end{equation}
where the $\ket{e_{i,j}^k}$ are arbitrary states in $\mathcal{H}_E$.  Unitarity of $U_F$ imposes several conditions on these states of course - for instance $\braket{e_j|e_j} = \braket{e_{i,j}^0|e_{i,j}^0} + \braket{e_{i,j}^1|e_{i,j}^1}$, for $i = 0,1$.

After this attack, $E$ passes the qubit to $A$ who performs a $Z$ basis measurement, using the result as her raw key bit.  Tracing out $A$'s system, leaving only $B$ and $E$'s, yields:

\begin{align}
\rho_{BE} &= \frac{1}{2}\ket{0}\bra{0}_B \otimes \left(\ket{e_{0,0}^0}\bra{e_{0,0}^0} + \ket{e_{0,0}^1}\bra{e_{0,0}^1} + \ket{e_{0,2}^0}\bra{e_{0,2}^0} + \ket{e_{0,2}^1}\bra{e_{0,2}^1}\right)\label{eq:protocol-final}\\
&+\frac{1}{2}\ket{1}\bra{1}_B \otimes \left(\ket{e_{1,1}^0}\bra{e_{1,1}^0} + \ket{e_{1,1}^1}\bra{e_{1,1}^1} + \ket{e_{1,3}^0}\bra{e_{1,3}^0} + \ket{e_{1,3}^1}\bra{e_{1,3}^1}\right).\notag
\end{align}

It is important to observe that $A$ and $B$ may estimate the $Z$ basis noise in both the forward channel and the reverse channel, during the parameter estimation stage.  In particular, they may estimate the quantity $p_{i,j,k}$ which we use to denote the probability that, if $A$ initially sends $\ket{i}$, then $B$ measures $\ket{j}$, and $A$ measures $\ket{k}$.  For example, if there is no noise in the $Z$ basis, it should hold that $p_{0,0,0} = p_{1,1,1} = 1$.  These parameters can be used to estimate the value $\braket{e_{a,b}^c|e_{a,b}^c}$.

For example, to estimate $p_{0,1,0}$, consider the case that $A$ first sends $\ket{0}$.  After $E$'s first attack, the state evolves to $\ket{0,e_0} + \ket{1,e_1}$ and the probability that $B$ measures $\ket{1}$ is $\braket{e_1|e_1}$ after which the state collapses to $\ket{1,e_1}/\sqrt{\braket{e_1|e_1}}$.  On its return, the qubit is attacked again causing it to evolve to:
\[
\frac{\ket{0,e_{1,1}^0} + \ket{1,e_{1,1}^1}}{\sqrt{\braket{e_1|e_1}}},
\]
from which it is clear that the probability of $A$ measuring $\ket{0}$ is $\braket{e_{1,1}^0|e_{1,1}^0}/\braket{e_1|e_1}$.  Combining all of this, we have (conditioning on the event that $A$ sends $\ket{0}$):
\begin{align*}
p_{0,1,0} &= Pr(B \text{ measures } \ket{1} \text{ and } A \text{ measures }\ket{0})\\
&=Pr(B \text{ measures } \ket{1}) Pr(A \text{ measures } \ket{0} | \text{ } B \text{ measures } \ket{1})\\
&=\braket{e_1|e_1} \left(\frac{\braket{e_{1,1}^0|e_{1,1}^0}}{\braket{e_1|e_1}}\right) = \braket{e_{1,1}^0|e_{1,1}^0}.
\end{align*}

Similarly, we may estimate the following:

\begin{align}
p_{0,0,0} = \braket{e_{0,0}^0|e_{0,0}^0} && p_{1,1,1} = \braket{e_{1,3}^1|e_{1,3}^1}\label{eq:prbounds}\\
p_{0,0,1} = \braket{e_{0,0}^1|e_{0,0}^1} && p_{1,1,0} = \braket{e_{1,3}^0|e_{1,3}^0}\notag\\
p_{0,1,0} = \braket{e_{1,1}^0|e_{1,1}^0} && p_{1,0,1} = \braket{e_{0,2}^1|e_{0,2}^1}\notag\\
p_{0,1,1} = \braket{e_{1,1}^1|e_{1,1}^1} && p_{1,0,0} = \braket{e_{0,2}^0|e_{0,2}^0}\notag
\end{align}

\subsection{Bounding $S(B|E)$}

Before continuing, we need a small lemma concerning the von Neumann entropy of a particular form of system.  The result is not difficult to show, however we include the proof for completeness.

\begin{lemma}\label{lemma:entropy}
Let $\mathcal{H} = \mathcal{H}_X\otimes\mathcal{H}_Y$ be a bipartite Hilbert space with $\dim\mathcal{H}_X = n$ and $\dim\mathcal{H}_Y = m$ (both finite dimensional) and let $\{\ket{1}_X, \cdots, \ket{n}_X\}$ be an orthonormal basis of $\mathcal{H}_X$.  Consider the density operator:
\[
\rho = \sum_{j=1}^np_j\ket{j}\bra{j}_X \otimes \sigma_j,
\]
acting on $\mathcal{H}$, where each $\sigma_j$ is a Hermitian, positive semi-definite operator, of unit trace, acting on $\mathcal{H}_Y$.  Furthermore, assume $tr\rho = 1$ (which implies $\sum_jp_j = 1$).  Then:
\begin{equation}\label{eq:entropy-computation}
S(\rho) = H\left(p_1, p_2, \cdots, p_n\right) + \sum_{j=1}^np_jS\left(\sigma_j\right).
\end{equation}
\end{lemma}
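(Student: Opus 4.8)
The plan is to exploit the block-diagonal structure of $\rho$ with respect to the orthonormal basis $\{\ket{j}_X\}$ in order to read off its eigenvalues explicitly, and then expand the resulting entropy. First I would diagonalize each block: since each $\sigma_j$ is Hermitian and positive semi-definite with unit trace, its spectral decomposition takes the form $\sigma_j = \sum_{k=1}^m \lambda_k^{(j)} \ket{f_k^{(j)}}\bra{f_k^{(j)}}$, where $\{\ket{f_k^{(j)}}\}_k$ is an orthonormal basis of $\mathcal{H}_Y$ and the eigenvalues satisfy $\lambda_k^{(j)} \ge 0$ and $\sum_k \lambda_k^{(j)} = 1$. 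Substituting this into the definition of $\rho$ yields
\[
\rho = \sum_{j=1}^n \sum_{k=1}^m p_j \lambda_k^{(j)} \, \ket{j}\bra{j}_X \otimes \ket{f_k^{(j)}}\bra{f_k^{(j)}}_Y.
\]

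Next I would argue that the states $\ket{j}_X \otimes \ket{f_k^{(j)}}$ form an orthonormal eigenbasis of $\rho$, so that the numbers $p_j \lambda_k^{(j)}$ are precisely its eigenvalues. This is where the main obstacle lies: the eigenvectors $\ket{f_k^{(j)}}$ belonging to different blocks $\sigma_j$ need not be mutually orthogonal, so one cannot conclude orthonormality from the $\mathcal{H}_Y$ factor alone. The key observation is that orthogonality is instead guaranteed by the $\mathcal{H}_X$ factor: for $j \ne j'$ we have $\braket{j|j'}_X = 0$, which forces $\ket{j}_X \otimes \ket{f_k^{(j)}}$ and $\ket{j'}_X \otimes \ket{f_{k'}^{(j')}}$ to be orthogonal regardless of the overlap of their $\mathcal{H}_Y$ components; within a fixed block, orthonormality follows from that of $\{\ket{f_k^{(j)}}\}_k$. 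Counting the $nm$ such vectors confirms they form a complete orthonormal eigenbasis of $\mathcal{H}$.

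Finally, with the eigenvalues in hand, I would compute
\[
S(\rho) = -\sum_{j,k} p_j \lambda_k^{(j)} \log\!\left(p_j \lambda_k^{(j)}\right)
\]
and split the logarithm as $\log p_j + \log \lambda_k^{(j)}$. Using $\sum_k \lambda_k^{(j)} = 1$ collapses the first contribution to $-\sum_j p_j \log p_j = H(p_1, \ldots, p_n)$, while the remaining contribution regroups as $\sum_j p_j \bigl(-\sum_k \lambda_k^{(j)} \log \lambda_k^{(j)}\bigr) = \sum_j p_j S(\sigma_j)$, giving the claimed identity. Apart from the cross-block orthogonality subtlety noted above, the remaining steps are routine bookkeeping.
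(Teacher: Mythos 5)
Your proof is correct and follows essentially the same route as the paper's: both exploit the block-diagonal structure of $\rho$ to identify its eigenvalues as the products $p_j\lambda_k^{(j)}$, then split the logarithm $\log(p_j\lambda_k^{(j)}) = \log p_j + \log\lambda_k^{(j)}$ and use $\sum_k \lambda_k^{(j)} = 1$ to regroup the sum into $H(p_1,\ldots,p_n) + \sum_j p_j S(\sigma_j)$. The only difference is one of explicitness: where the paper writes $\rho$ as a block-diagonal matrix and asserts the eigenvalues ``clearly,'' you justify that step by exhibiting the orthonormal eigenbasis $\ket{j}_X\otimes\ket{f_k^{(j)}}$ and noting that cross-block orthogonality comes from the $\mathcal{H}_X$ factor --- a worthwhile clarification, but the same argument.
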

\begin{proof}
Choosing a suitable basis, we may write $\rho$ as:

\begin{equation}
\rho \equiv \left(\begin{array}{ccccc}
p_1\sigma_1 & 0 & 0 & \cdots & 0\\
0 & p_2\sigma_2 & 0 & \cdots & 0\\
\vdots &&&&\vdots\\
0 & 0 & 0 & \cdots & p_n\sigma_n
\end{array}\right)
\end{equation}

Let $\{\lambda_i^j\}_{i=1}^m$ be the eigenvalues of $\sigma_j$.  Since each $\sigma_j$ is Hermitian, positive semi-definite, these are real and non-negative.  Since each $\sigma_j$ is of unit trace, it holds that $\sum_i\lambda_i^j = 1$ for all $j=1, 2, \cdots, n$.  Clearly, the eigenvalues of $\rho$ then are:
\[
\bigcup_{j=1}^n\bigcup_{i=1}^m\left\{p_j\lambda_i^j\right\}.
\]

We now compute $S(\rho)$:
\begin{align*}
S(\rho) &= -\sum_{j=1}^n\sum_{i=1}^m p_j\lambda_i^j\log p_j\lambda_i^j\\
&= -\sum_{j=1}^n\sum_{i=1}^m\left(p_j\lambda_i^j\log p_j + p_j\lambda_i^j\log\lambda_i^j\right)\\
&=- \sum_{j=1}^np_j\log p_j\sum_{i=1}^m\lambda_i^j -\sum_{j=1}^np_j\sum_{i=1}^m\lambda_i^j\log\lambda_i^j\\
&=H(p_1, \cdots, p_n) + \sum_{j=1}^np_jS(\sigma_j).
\end{align*}
\end{proof}

We now return to our security analysis.  To compute the key rate, we must compute $S(B|E) = S(BE) - S(E)$ using Equation \ref{eq:protocol-final}.  Due to the high-dimensionality of $\mathcal{H}_E$, this is difficult and so we will employ a technique similar to one used in \cite{QKD-keyrate-general} for proving the security of BB84 \cite{QKD-BB84}, though suitably modified for our purposes: that is, we will condition on a new random variable of our choice, in order to simplify the analysis.  Due to the strong sub additivity of von Neumann entropy, it holds that, for any tripartite system $\mathcal{H}_X\otimes\mathcal{H}_Y\otimes\mathcal{H}_Z$:
\[
S(X|Y) \ge S(X|YZ).
\]
If we introduce a new system $\mathcal{H}_C$ into Equation \ref{eq:protocol-final}, it will hold that:
\[
S(B|E) - H(B|A) \ge S(B|EC) - H(B|A),
\]
thus providing us with a lower-bound on the key rate of this protocol.

Let $\mathcal{H}_C$ be the four dimensional space spanned by $\{\ket{C,0}, \ket{C,1}, \ket{W,1}, \ket{W,2}\}$.  We will use the state $\ket{C,i}\bra{C,i}$ to represent the event that $A$ and $B$'s raw key bits agree/match (that is, they are ``correct'') and that the qubit sent from $A$ was flipped $i$ times (in the $Z$ basis, which is all we are considering for now as $X$ basis states do not contribute to the raw key and will be considered later).  For example, if $A$ sends a $\ket{0}$, $B$ measures a $\ket{1}$, and $A$ measures a $\ket{1}$, this will be the event $\ket{C,1}\bra{C,1}$.  Similarly for the state $\ket{W,i}\bra{W,i}$ where now $A$ and $B$'s raw key bits do not match (they are ``wrong'').  (Note that in \cite{QKD-keyrate-general}, the authors only conditioned on the ``correct'' or ``wrong'' events which was sufficient in the one-way quantum channel case.)

Incorporating this new system yields the mixed state:
\begin{align}
\rho_{BEC} &= \frac{1}{2}\ket{0}\bra{0}_B\otimes&&(\ket{C,0}\bra{C,0}\otimes\ket{e_{0,0}^0}\bra{e_{0,0}^0} + \ket{W,1}\bra{W,1} \otimes \ket{e_{0,0}^1}\bra{e_{0,0}^1}\notag\\
&&&+ \ket{C,1}\bra{C,1} \otimes \ket{e_{0,2}^0}\bra{e_{0,2}^0} + \ket{W,2}\bra{W,2}\otimes\ket{e_{0,2}^1}\bra{e_{0,2}^1})\notag\\
&+ \frac{1}{2}\ket{1}\bra{1}_B\otimes&&(\ket{W,2}\bra{W,2}\otimes\ket{e_{1,1}^0}\bra{e_{1,1}^0} + \ket{C,1}\bra{C,1} \otimes \ket{e_{1,1}^1}\bra{e_{1,1}^1}\notag\\
&&&+ \ket{W,1}\bra{W,1} \otimes \ket{e_{1,3}^0}\bra{e_{1,3}^0} + \ket{C,0}\bra{C,0}\otimes\ket{e_{1,3}^1}\bra{e_{1,3}^1})\notag
\end{align}
Note that it is not relevant that $A$ and $B$ cannot know whether their key bit is ``correct'' or not (on those iterations not used for parameter estimation of course).  By conditioning on these events, however, we are able to compute a lower bound on the protocol's key rate.  The actual key rate may be higher of course.  One may think of this as providing $E$ with additional information which can only increase her power, thus providing us with a lower-bound on the protocol's security.

Choosing a suitable basis, we may write $\rho_{BEC}$ as a diagonal matrix, where the diagonal entries are elements of the form $\frac{1}{2}\braket{e_{i,j}^k|e_{i,j}^k}$ for all $\ket{e_{i,j}^k}$ which appear in the above equation.  Thus:
\begin{align}\label{eq:entropy-BEC}
S(BEC) = S(\rho_{BEC}) &= H\left(\frac{1}{2}\braket{e_{0,0}^0|e_{0,0}^0}, \cdots, \frac{1}{2}\braket{e_{1,3}^1|e_{1,3}^1}\right)\\
&=H\left(\frac{1}{2}p_{0,0,0}, \frac{1}{2}p_{0,0,1}, \cdots, \frac{1}{2}p_{1,1,1}\right),\notag
\end{align}
where the arguments in the (Shannon) entropy function above are all $p_{i,j,k}$ from Equation \ref{eq:prbounds}.  This is a quantity that $A$ and $B$ may compute after parameter estimation.

What remains is to bound $S(EC)$.  Tracing out $B$ from $\rho_{BEC}$ yields:
\begin{align*}
\rho_{EC} &= \ket{C,0}\bra{C,0} \otimes \left(\frac{1}{2}\sigma_1\right) + \ket{C,1}\bra{C,1}\otimes \left(\frac{1}{2}\sigma_2\right)\\
&+ \ket{W,1}\bra{W,1} \otimes \left(\frac{1}{2}\sigma_3\right) + \ket{W,2}\bra{W,2}\otimes \left(\frac{1}{2}\sigma_4\right),
\end{align*}
where $\sigma_i$ are the positive semi-definite operators:
\begin{align*}
\sigma_1 &= \ket{e_{0,0}^0}\bra{e_{0,0}^0} + \ket{e_{1,3}^1}\bra{e_{1,3}^1}\\
\sigma_2 &= \ket{e_{0,2}^0}\bra{e_{0,2}^0} + \ket{e_{1,1}^1}\bra{e_{1,1}^1}\\
\sigma_3 &= \ket{e_{0,0}^1}\bra{e_{0,0}^1} + \ket{e_{1,3}^0}\bra{e_{1,3}^0}\\
\sigma_4 &= \ket{e_{0,2}^1}\bra{e_{0,2}^1} + \ket{e_{1,1}^0}\bra{e_{1,1}^0}
\end{align*}

Assume, for now, that $tr\sigma_j > 0$ for all $j=1, 2,3,4$.  Let $t_j = tr\sigma_j$ and define $\tilde{\sigma}_j = \sigma_j/t_j$.  Then we may write $\rho_{EC}$ as:
\begin{align*}
\rho_{EC} &= \ket{C,0}\bra{C,0} \otimes \left(\frac{1}{2}t_1\tilde{\sigma}_1\right) + \ket{C,1}\bra{C,1}\otimes \left(\frac{1}{2}t_2\tilde{\sigma}_2\right)\\
&+ \ket{W,1}\bra{W,1} \otimes \left(\frac{1}{2}t_3\tilde{\sigma}_3\right) + \ket{W,2}\bra{W,2}\otimes \left(\frac{1}{2}t_4\tilde{\sigma}_4\right).
\end{align*}

Each $\tilde{\sigma_j}$ is a positive semi-definite operator of unit trace.  Also, since $tr\rho_{EC} = 1$ implies $\frac{1}{2}t_1+\cdots+\frac{1}{2}t_4 = 1$, we may apply Lemma \ref{lemma:entropy} to the above state yielding:
\[
S(EC) = S(\rho_{EC}) = H\left(\frac{1}{2}t_1, \cdots, \frac{1}{2}t_4\right) + \frac{1}{2}\sum_{j=1}^4t_jS(\tilde{\sigma}_j).
\]

We assumed $t_j > 0$ above.  If there is a $t_j = 0$, then observe $\sigma_j \equiv 0$.  Indeed:
\begin{align*}
tr\sigma_j = 0 \iff \braket{e_{x,y}^z|e_{x,y}^z} + \braket{e_{a,b}^c|e_{a,b}^c} = 0,
\end{align*}
for appropriate $a,b,c, x,y,z$.  But, since $\braket{e|e} \ge 0$ for any vector $e$, this forces $\braket{e_{x,y}^z|e_{x,y}^z} = \braket{e_{a,b}^c|e_{a,b}^c} = 0$ which is true only if $\ket{e_{x,y}^z} \equiv \ket{e_{a,b}^c} \equiv 0$.  Thus, if $tr\sigma_j = 0$, then $\sigma_j \equiv 0$, and so it could simply be removed from the description of $\rho_{EC}$ above, and any reference to the $j$'th matrix is removed from the subsequent computation of $S(EC)$.

Of course $t_1 = tr\sigma_1 = \braket{e_{0,0}^0|e_{0,0}^0} + \braket{e_{1,3}^1|e_{1,3}^1} = p_{0,0,0} + p_{1,1,1}$, and similarly for the other $\sigma_j$.  Observing that each $\sigma_j$ is a two-dimensional system, we may use the trivial bound $S(\tilde{\sigma}_j) \le 1$ to show:
\begin{align*}
S(EC) &\le H\left(\frac{1}{2}(p_{0,0,0} + p_{1,1,1}), \frac{1}{2}(p_{1,0,0} + p_{0,1,1}), \frac{1}{2}(p_{0,0,1} + p_{1,1,0}), \frac{1}{2}(p_{1,0,1} + p_{0,1,0})\right)\\
&+ \frac{1}{2}\left(p_{1,0,0}+p_{0,1,1} + p_{0,0,1} + p_{1,1,0} + p_{1,0,1} + p_{0,1,0}\right)\\
&+\frac{1}{2}(p_{0,0,0}+p_{1,1,1})S(\tilde{\sigma}_1).
\end{align*}
Note that this bound holds even if there is a $t_j = 0$.  Thus, at this point, there is no need to take extra care of such a case.

If the noise of the quantum channel is low, the values $p_{i,j,k}$ should be low, except for $p_{0,0,0}$ and $p_{1,1,1}$ which should be high.  All that remains, therefore, is to upper bound $S(\tilde{\sigma}_1)$.

Let us first find the eigenvalues of $\sigma_1$ (the unnormalized version); the eigenvalues of $\tilde{\sigma}_1$ then will simply be scalar multiples of these.  We may write $\ket{e_{0,0}^0} = \sqrt{p_{0,0,0}}\ket{e}$ and $\ket{e_{1,3}^1} = \alpha\ket{e} + \beta\ket{\zeta}$, where $\braket{e|e} = \braket{\zeta|\zeta} = 1$, $\braket{e|\zeta} = 0$, and $\alpha, \beta \in \mathbb{C}$.  This further implies:
\begin{align}
&|\alpha|^2 + |\beta|^2 = \braket{e_{1,3}^1|e_{1,3}^1} = p_{1,1,1}\\
&\alpha\sqrt{p_{0,0,0}} = \braket{e_{0,0}^0|e_{1,3}^1} \Rightarrow |\alpha|^2 = \frac{|\braket{e_{0,0}^0|e_{1,3}^1}|^2}{p_{0,0,0}}\label{eq:alpha2}
\end{align}
(we may assume that $p_{0,0,0} > 0$; otherwise, there is too much noise, and $A$ and $B$ will abort).  In this $\{\ket{e}, \ket{\zeta}\}$ basis, we may write $\sigma_1$ as:
\[
\sigma_1 \equiv \left(\begin{array}{cc}p_{0,0,0}+|\alpha|^2 & \alpha\beta^*\\\\
\alpha^*\beta & |\beta|^2\end{array}\right).
\]

The eigenvalues of $\sigma_1$, denoted $\lambda_+$ and $\lambda_-$, are:
\begin{align*}
\lambda_{\pm} &= \frac{1}{2}\left(p_{0,0,0} + p_{1,1,1} \pm \sqrt{\left(p_{0,0,0} + |\alpha|^2 - |\beta|^2\right)^2 + 4|\alpha|^2|\beta|^2}\right)\\
&=\frac{1}{2}\left(p_{0,0,0}+p_{1,1,1} \pm \sqrt{\left(p_{0,0,0} - p_{1,1,1} + 2|\alpha|^2\right)^2 + 4|\alpha|^2|\beta|^2}\right),
\end{align*}
where, above, we have used the identity $|\alpha|^2 + |\beta|^2 = p_{1,1,1} \Rightarrow -|\beta|^2 = |\alpha|^2 - p_{1,1,1}$.  Let $\Delta = p_{0,0,0} - p_{1,1,1}$ and, using the identity $|\beta|^2 = p_{1,1,1} - |\alpha|^2$, we continue:
\begin{align*}
\lambda_{\pm} &= \frac{1}{2}\left(p_{0,0,0} + p_{1,1,1} \pm \sqrt{\Delta^2 + 4|\alpha|^4 + 4\Delta|\alpha|^2 + 4|\alpha|^2(p_{1,1,1}-|\alpha|^2)}\right)\\
&=\frac{1}{2}\left(p_{0,0,0} + p_{1,1,1} \pm \sqrt{\Delta^2 + 4|\alpha|^2(|\alpha|^2 + \Delta + p_{1,1,1} - |\alpha|^2)}\right)\\
&=\frac{1}{2}\left(p_{0,0,0} + p_{1,1,1} \pm \sqrt{(p_{0,0,0}-p_{1,1,1})^2 + 4|\alpha|^2p_{0,0,0}}\right).
\end{align*}
Finally, using Equation \ref{eq:alpha2}, we have:
\begin{equation}
\lambda_{\pm} = \frac{1}{2}\left(p_{0,0,0} + p_{1,1,1} \pm \sqrt{(p_{0,0,0} - p_{1,1,1})^2 + 4|\braket{e_{0,0}^0|e_{1,3}^1}|^2}\right).
\end{equation}

From this, the eigenvalues of $\tilde{\sigma}_1 = \sigma_1 / tr\sigma_1 = \sigma_1 / (p_{0,0,0}+p_{1,1,1})$ are:
\begin{equation}\label{eq:eigenvalue-sigma1}
\tilde{\lambda}_{\pm} = \frac{1}{2} \pm \frac{\sqrt{(p_{0,0,0} - p_{1,1,1})^2 + 4|\braket{e_{0,0}^0|e_{1,3}^1}|^2}}{2(p_{0,0,0} + p_{1,1,1})},
\end{equation}
and so:
\[
S(\tilde{\sigma}_1) = -\tilde{\lambda}_+\log\tilde{\lambda}_+ - \tilde{\lambda}_-\log\tilde{\lambda}_- = h(\tilde{\lambda}_+),
\]
a function which depends only on the quantity $|\braket{e_{0,0}^0|e_{1,3}^1}|^2 \ge 0$.  Note that, as $|\braket{e_{0,0}^0|e_{1,3}^1}|^2$ decreases to zero, this causes $\tilde{\lambda}_\pm$ to become closer to $1/2$, which causes $S(\tilde{\sigma}_1)$ to increase (this function taking its maximum when $\tilde{\lambda}_{\pm} = \frac{1}{2}$).  Thus, to find a lower bound for the key rate $r \ge S(BEC) - S(EC) - H(B|A)$, we must find a lower bound on $|\braket{e_{0,0}^0|e_{1,3}^1}|^2$ (thus upper bounding $S(EC)$).  Indeed, if $|\braket{e_{0,0}^0|e_{1,3}^1}|^2 \ge \mathcal{B}$, then $S(\tilde{\sigma}_1) = h(\tilde{\lambda}_+) \le h(\tilde{\lambda})$, where:
\begin{equation}\label{eq:ev-max-bound}
\tilde{\lambda} = \frac{1}{2} + \frac{\sqrt{(p_{0,0,0} - p_{1,1,1})^2 + 4\mathcal{B}}}{2(p_{0,0,0} + p_{1,1,1})}.
\end{equation}

\subsection{Using the $X$ Basis Noise}

We will lower bound the value $|\braket{e_{0,0}^0|e_{1,3}^1}|^2$, by considering the noise in the $X$ basis (note that, thus far, we have considered only the noise in the $Z$ basis).  Assume now that $A$ sends an $X$ basis state $\ket{+}$ or $\ket{-}$ initially, $B$ chooses to reflect, and $A$ measures in the $X$ basis, thus allowing her to estimate the channel noise in this basis.

In this event, $B$'s operation is essentially the identity operator and, so, if $A$ sends the state $\ket{a}$ (either $\ket{+}$ or $\ket{-}$ in our case), the state returning to her, after it passed through $E$ twice, is simply $\ket{a'} = V\ket{a}$, where $V = U_FU_E$ (the same operators used last section).  Using Equations \ref{eq:UE} and \ref{eq:UF}, we may describe $V$'s action on basis states $\ket{0}, \ket{1} \in \mathcal{H}_T$ as follows (as before, assuming, without loss of generality, that $E$'s ancilla is cleared to the zero state $\ket{0}_E$):
\begin{align*}
V\ket{0,0}_{TE} &= U_F(\ket{0,e_0} + \ket{1,e_1})\\
&= \ket{0}\otimes(\underbrace{\ket{e_{0,0}^0} + \ket{e_{1,1}^0}}_{\ket{f_0}}) + \ket{1}\otimes(\underbrace{\ket{e_{0,0}^1} + \ket{e_{1,1}^1}}_{\ket{f_1}})\\
&=\ket{0,f_0} + \ket{1,f_1}\\\\
V\ket{1,0}_{TE} &= U_F(\ket{0,e_2} + \ket{1,e_3})\\
&=\ket{0}\otimes(\underbrace{\ket{e_{0,2}^0} + \ket{e_{1,3}^0}}_{\ket{f_2}}) + \ket{1}\otimes(\underbrace{\ket{e_{0,2}^1} + \ket{e_{1,3}^1}}_{\ket{f_3}})\\
&=\ket{0,f_2} + \ket{1,f_3}.
\end{align*}

Since $U_E$ and $U_F$ are both unitary, so is $V = U_FU_E$ which implies:
\begin{align}
&\braket{f_0|f_0} + \braket{f_1|f_1} = 1\label{eq:UEUFunitary}\\
&\braket{f_2|f_2} + \braket{f_3|f_3} = 1\notag\\
&\braket{f_0|f_2} + \braket{f_1|f_3} = 0.\notag
\end{align}

By linearity, we have:
\begin{align}
V\ket{+,0}_{TE} &=\ket{+, g_0} + \ket{-, g_1}\label{eq:UEUFX}\\
V\ket{-,0}_{TE} &= \ket{+, g_2} + \ket{-, g_3},\notag
\end{align}
where:
\begin{align}
\ket{g_0} &= \frac{1}{2}(\ket{f_0} + \ket{f_1} + \ket{f_2} + \ket{f_3})\label{eq:UEUFXg}\\
\ket{g_1} &= \frac{1}{2}(\ket{f_0} - \ket{f_1} + \ket{f_2} - \ket{f_3})\notag\\
\ket{g_2} &= \frac{1}{2}(\ket{f_0} + \ket{f_1} - \ket{f_2} - \ket{f_3})\notag\\
\ket{g_3} &= \frac{1}{2}(\ket{f_0} - \ket{f_1} - \ket{f_2} + \ket{f_3}).\notag
\end{align}

In this notation, the probability that $A$ measures $\ket{-}$ if she originally sent $\ket{+}$ and $B$ reflected, is $\braket{g_1|g_1}$; similarly, $\braket{g_2|g_2}$ is the probability that $A$ measures $\ket{+}$ if she originally sent $\ket{-}$.  These quantities, which $A$ may estimate in the parameter estimation stage, represent the error $E$'s attack induces in the $X$ basis.

Let $p_{+-}$ be the probability that $A$ measures $\ket{-}$ if she sends $\ket{+}$ (assuming $B$ reflected); similarly define $p_{-+}$ as the probability that $A$ measures $\ket{+}$ if she initially sent $\ket{-}$.  Then, using Equations \ref{eq:UEUFunitary}, \ref{eq:UEUFX}, and \ref{eq:UEUFXg}, we have:
\begin{align*}
p_{+-} = \braket{g_1|g_1} &= \frac{1}{2} + \frac{1}{2}Re(-\braket{f_0|f_1} - \braket{f_0|f_3} - \braket{f_1|f_2} - \braket{f_2|f_3})\\
p_{-+} = \braket{g_2|g_2} &= \frac{1}{2} + \frac{1}{2}Re(\braket{f_0|f_1} - \braket{f_0|f_3} - \braket{f_1|f_2} + \braket{f_2|f_3}).
\end{align*}

Summing these two and expanding yields:
\begin{align*}
p_{+-} + p_{-+} &= 1 - Re(\braket{f_0|f_3} + \braket{f_1|f_2})\\
&= 1 - Re(\braket{e_{0,0}^0|e_{0,2}^1} + \braket{e_{0,0}^0|e_{1,3}^1} + \braket{e_{1,1}^0|e_{0,2}^1} + \braket{e_{1,1}^0|e_{1,3}^1})\\
&-Re(\braket{e_{0,0}^1|e_{0,2}^0} + \braket{e_{0,0}^1|e_{1,3}^0} + \braket{e_{1,1}^1|e_{0,2}^0} + \braket{e_{1,1}^1|e_{1,3}^0}).
\end{align*}
Solving for $Re(\braket{e_{0,0}^0|e_{1,3}^1}$ provides us with the expression:
\begin{align*}
Re(\braket{e_{0,0}^0|e_{1,3}^1}) &= 1 - p_{+-} - p_{-+} - Re(\braket{e_{0,0}^0|e_{0,2}^1} + \braket{e_{1,1}^0|e_{0,2}^1})\\
&-Re(\braket{e_{1,1}^0|e_{1,3}^1} + \braket{e_{0,0}^1|e_{0,2}^0} + \braket{e_{0,0}^1|e_{1,3}^0})\\
&-Re(\braket{e_{1,1}^1|e_{0,2}^0} + \braket{e_{1,1}^1|e_{1,3}^0})
\end{align*}

Observe that, for any two vectors $\ket{x}$ and $\ket{y}$, it holds: $|Re(\braket{x|y})| \le |\braket{x|y}| \le \sqrt{\braket{x|x}\braket{y|y}}$, the first inequality is obvious, the last inequality is due to the Cauchy-Schwarz inequality.  Thus, $Re(\braket{x|y}) \in [-\sqrt{\braket{x|x}\braket{y|y}}, \sqrt{\braket{x|x}\braket{y|y}}]$.  Using this fact, and Equation \ref{eq:prbounds}, yields:
\begin{align}
Re(\braket{e_{0,0}^0|e_{1,3}^1}) \ge 1 &- p_{+-} - p_{-+} - \sqrt{p_{0,0,0}p_{1,0,1}} - \sqrt{p_{0,1,0}p_{1,0,1}}\label{eq:XBound}\\
&- \sqrt{p_{0,1,0}p_{1,1,1}} - \sqrt{p_{0,0,1}p_{1,0,0}} - \sqrt{p_{0,0,1}p_{1,1,0}}\notag\\
&- \sqrt{p_{0,1,1}p_{1,0,0}} - \sqrt{p_{0,1,1}p_{1,1,0}} = B.\notag
\end{align}

Observing that $|\braket{e_{0,0}^0|e_{1,3}^1}|^2 = Re^2(\braket{e_{0,0}^0|e_{1,3}^1}) + Im^2(\braket{e_{0,0}^0|e_{1,3}^1}) \ge Re^2(\braket{e_{0,0}^0|e_{1,3}^1})$, the right hand side of Equation \ref{eq:XBound} (denoted $B$), assuming it is non-negative (which should be the case if the noise is small enough), may be used to lower bound $|\braket{e_{0,0}^0|e_{1,3}^1}|^2$, thus upper bounding $S(EC)$ as required.

Indeed, let:
\[
\mathcal{B} = \left\{\begin{array}{ll}B^2 & \text{if } B \ge 0\\ 0 & \text{otherwise}\end{array}\right.,
\]
then, from the above discussion, it is clear that $|\braket{e_{0,0}^0|e_{1,3}^1}|^2 \ge \mathcal{B}$ (note that $|\braket{e_{0,0}^0|e_{1,3}}|^2$ is always non-negative, so it makes sense to ``cap'' $\mathcal{B}$ at zero in the event $B < 0$).  Using this, with Equation \ref{eq:ev-max-bound} and the discussion immediately above it, gives us a bound on $S(\tilde{\sigma}_1)$ and thus a bound on the quantity $S(EC)$.

\subsection{Final Key Rate Bound}
All that remains is to compute $H(B|A) = H(B,A) - H(A)$.  However, this is easily done given $A$ and $B$'s estimate of the $Z$ basis error rate.  Indeed, let $p_A(0)$ be the probability that $A$'s raw key bit is zero.  This value is simply:
\begin{equation}\label{eq:pa}
p_A(0) = \frac{1}{2}(p_{0,0,0} + p_{0,1,0} + p_{1,1,0} + p_{1,0,0}),
\end{equation}
and so:
\[
H(A) = H(p_A(0), 1-p_A(0)) = h(p_A(0)).
\]

Let $p(b,a)$ be the probability that $B$'s raw key bit is $b$ while $A$'s is $a$.  These values are:
\begin{align}
p(0,0) = \frac{1}{2}(p_{0,0,0} + p_{1,0,0}) && p(1,1) = \frac{1}{2}(p_{0,1,1} + p_{1,1,1})\label{eq:jointPr}\\
p(0,1) = \frac{1}{2}(p_{0,0,1} + p_{1,0,1}) && p(1,0) = \frac{1}{2}(p_{0,1,0} + p_{1,1,0}),\notag
\end{align}
from which the value of $H(B,A)$ may be computed directly.

Putting everything together, the key rate $r$ is lower-bounded by:
\begin{align}
r \ge &H\left(\frac{1}{2}p_{0,0,0}, \frac{1}{2}p_{0,0,1}, \cdots, \frac{1}{2}p_{1,1,1}\right)\label{eq:keyrate}\\
&-H\left(\frac{1}{2}(p_{0,0,0} + p_{1,1,1}), \frac{1}{2}(p_{1,0,0} + p_{0,1,1}), \frac{1}{2}(p_{0,0,1} + p_{1,1,0}), \frac{1}{2}(p_{1,0,1} + p_{0,1,0})\right)\notag\\
&- \frac{1}{2}\left(p_{1,0,0}+p_{0,1,1} + p_{0,0,1} + p_{1,1,0} + p_{1,0,1} + p_{0,1,0}\right)\notag\\
&-\frac{1}{2}(p_{0,0,0}+p_{1,1,1})h(\tilde{\lambda})\notag\\
&+h\left(p_A(0)\right) - H\left(p(0,0), p(0,1), p(1,0), p(1,1)\right),\notag
\end{align}
where $\tilde{\lambda}$ is from Equation \ref{eq:ev-max-bound}.

This key rate equation is a function, easily computed, depending only on parameters that may be estimated by $A$ and $B$.

\subsection{Security Against General Attacks}

The above proves security against collective attacks.  However, after the protocol, $A$ and $B$ may symmetrize their raw key by permuting it using a randomly chosen, and publicly disclosed, permutation.  This makes the protocol permutation invariant, in which case, as shown in \cite{QKD-general-attack,QKD-general-attack2}, security against collective attacks is sufficient to prove security against any arbitrary general attack.  Thus, we have proven this protocol's unconditional security.

\subsection{Examples}

Our work above allows $A$ and $B$ to compute the final fraction of secure secret key bits that they can distill after privacy amplification, using only the observed statistics $p_{i,j,k}$, $p_{+-}$, and $p_{-+}$.  Let us now demonstrate our key rate bound on certain examples.  In particular, let us assume that $E$'s attack is symmetric in that it can be characterized as follows:
\begin{enumerate}
  \item Let $\overrightarrow{Q}$ denote the probability that if $A$ sends $\ket{i}$ initially, $B$ measures $\ket{1-i}$, for $i=0,1$.
  \item Let $\overleftarrow{Q}$ denote the probability that if $B$ sends $\ket{i}$, then $A$ measures $\ket{1-i}$, for $i=0,1$ (independently of the first channel).
  \item Let $p_{+-} = p_{-+} = Q_X$.
\end{enumerate}

We remark that $A$ and $B$ may estimate these three parameters and, in fact, can even enforce the restriction that $E$ use such a symmetric attack, a strategy used in other fully quantum protocols \cite{QKD-keyrate-general} (though in those protocols, there was only one quantum channel to consider).

In this case we have:
\begin{align*}
p_{0,0,0} &= p_{1,1,1} = (1-\overrightarrow{Q})(1-\overleftarrow{Q})\\
p_{0,0,1} &= p_{1,1,0} = (1-\overrightarrow{Q})\overleftarrow{Q}\\
p_{0,1,0} &= p_{1,0,1} = \overrightarrow{Q}\overleftarrow{Q}\\
p_{0,1,1} &= p_{1,0,0} = \overrightarrow{Q}(1-\overleftarrow{Q})
\end{align*}

Recall that, so long as $r > 0$, $A$ and $B$ may distill a secure secret key.  We consider three scenarios (summarized in Table \ref{table:noise}):

\begin{table}
\begin{tabular}{l|c|c|c}
&$\overrightarrow{Q} = \overleftarrow{Q} = Q$ & $\overrightarrow{Q} = Q/2$, $\overleftarrow{Q} = Q$ & $\overrightarrow{Q} = Q$ and $\overleftarrow{Q} = Q/2$\\
\hline
$Q_X = Q/2$ & $Q \le 5.92\%$ & $Q \le 6.98\%$ & $Q \le 8.96\%$\\
$Q_X = Q$ & $Q \le 5.34\%$ & $Q \le 6.16\%$ & $Q \le 7.79\%$\\
$Q_X = 2Q$ & $Q \le 4.51\%$ & $Q \le 5.05\%$ & $Q \le 6.25\%$
\end{tabular}
\caption{Showing the maximal value of $Q$, in a variety of scenarios, for which the key rate (Equation \ref{eq:keyrate}) remains positive.  $\protect\overrightarrow{Q}$ represents the probability that a $\ket{i}$ flips to a $\ket{1-i}$ in the forward direction ($A$ to $B$) while $\protect\overleftarrow{Q}$ represents the probability that a $\ket{i}$ flips to $\ket{1-i}$ in the reverse direction ($B$ to $A$).  Finally, $Q_X$ is the probability that $A$ measures $\ket{\pm}$ if she sent $\ket{\mp}$ and $B$ reflects.}\label{table:noise}
\end{table}

\begin{enumerate}
  \item $\overrightarrow{Q} = \overleftarrow{Q} = Q$.  In this case:
  \begin{itemize}
    \item If $Q_X = Q$, we see that $r > 0$ for all $Q \le 5.34\%$.
    \item If $Q_X = 2Q$, then $r > 0$ for all $Q \le 4.51\%$.
    \item If $Q_X = Q/2$, then the key rate, $r$, is positive for all $Q \le 5.92\%$.
  \end{itemize}
  See Figure \ref{fig:equal}.
  
  \item $\overrightarrow{Q} = Q/2$, $\overleftarrow{Q} = Q$.  In this case:
  \begin{itemize}
    \item If $Q_X = Q$, then $r > 0$ for all $Q \le 6.16\%$.
    \item If $Q_X = 2Q$, the key rate is positive for $Q \le 5.05\%$.
    \item If $Q_X = Q/2$, then $r > 0$ for $Q \le 6.98\%$.
  \end{itemize}
  See Figure \ref{fig:norighterror}.
  
  \item $\overrightarrow{Q} = Q$ and $\overleftarrow{Q} = Q/2$.  Then:
  \begin{itemize}
    \item If $Q_X = Q$, we see $r$ is positive for $Q \le 7.79\%$.
    \item If $Q_X = 2Q$ then $r > 0$ for $Q \le 6.25\%$.
    \item If $Q_X = Q/2$ then $r > 0$ for $Q \le 8.96\%$.
  \end{itemize}
  See Figure \ref{fig:nolefterror}.
\end{enumerate}

\begin{figure}
\includegraphics[width=300pt]{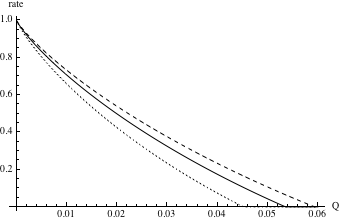}
\caption{Key rate bound when the $Z$ basis noise in the forward channel is the same as the noise in the reverse; that is, $\protect\overrightarrow{Q} = \protect\overleftarrow{Q} = Q$.  Solid line is when $Q_X = Q$ (where $Q_X$ is the $X$ basis noise).  Dashed line is when $Q_X = Q/2$; finally, the dotted line (lower) is for $Q_X = 2Q$.}\label{fig:equal}
\end{figure}

\begin{figure}
\includegraphics[width=300pt]{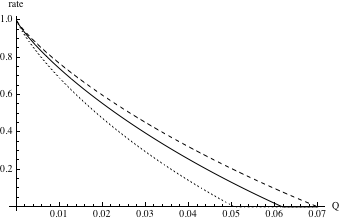}
\caption{Key rate bound when $\protect\overrightarrow{Q} = Q/2$ while $\protect\overleftarrow{Q} = Q$.  Solid line is when $Q_X = Q$.  Dashed line is when $Q_X = Q/2$; finally, the dotted line (lower) is for $Q_X = 2Q$.}\label{fig:norighterror}
\end{figure}

\begin{figure}
\includegraphics[width=300pt]{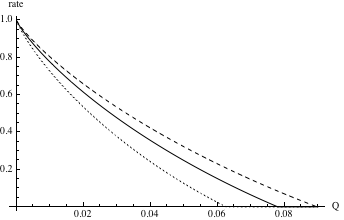}
\caption{Key rate bound when $\protect\overrightarrow{Q} = Q$ while $\protect\overleftarrow{Q} = Q/2$.  Solid line is when $Q_X = Q$.  Dashed line is when $Q_X = Q/2$; finally, the dotted line (lower) is for $Q_X = 2Q$.}\label{fig:nolefterror}
\end{figure}

Our results clearly show that one of the important factors (though not the only one) to this SQKD protocol's key rate, is the noise in the return quantum channel, connecting $B$ to $A$.  This makes sense, as any noise in the forward channel does not directly lead to an error in $A$ and $B$'s raw key bit and, thus, does not lead to additional information leaking due to error correction (though an attack in the forward direction might increase $E$'s information - a factor our key rate bound takes into account; in fact, a particular two-way eavesdropping strategy was shown in \cite{SQKD-twoway-eavesdropping} which provided a greater advantage to $E$ than by her simply attacking a single channel).

Interestingly, considering the case when $\overrightarrow{Q} = \overleftarrow{Q} = Q_X = Q$, even though this protocol cannot withstand as high an error rate as BB84 \cite{QKD-BB84}, which maintains a positive key rate for $Q = Q_X \le 11\%$; this protocol is comparable to a three state variant of BB84 \cite{QKD-BB84-three-state} which can withstand up to $4.25\%$ in this scenario.  It is also comparable to B92 \cite{QKD-B92} which can withstand up to $4.8\%$ error (assuming no preprocessing and a depolarization channel) before the key rate drops to zero \cite{QKD-renner-keyrate}.  Thus, this proves that even though we are limiting $B$'s capabilities in the semi-quantum setting, we are still capable of getting comparable tolerable noise thresholds compared to ``fully'' quantum protocols.

\section{Closing Remarks}

We have proven, for the first time, the unconditional security of a multi-state semi-quantum key distribution protocol.  Our bound may not be tight due to our conditioning on the additional random variable $C$.  However, it does show that this protocol is comparable to B92 \cite{QKD-B92} and the three-state BB84 \cite{QKD-BB84-three-state}, at least in terms of maximally allowed noise in the perfect qubit scenario.

We also showed that this protocol is most sensitive to noise in the reverse quantum channel (connecting $B$ to $A$).  Thus, if a practical implementation were constructed, one may make less effort to control the noise in the forward direction channel than the reverse.

It might be possible to adapt the technique we used in this paper to the security proof of other SQKD protocols (for instance the three and two state protocols described in \cite{SQKD-lessthan4}).  It would be interesting to compare the key rate bounds of these various protocols.


\end{document}